\theoremstyle{break}
\newtheorem{Theorem}{Theorem}[section]
\newtheorem{Corollary}{Corollary}[section]
\theoremstyle{change}
\begin{document}
\title{A generalization of the Kullback-Leibler divergence and its properties}
\author{{\sc Takuya Yamano}}
\email[Email: ]{yamano@amy.hi-ho.ne.jp}
\affiliation{Department of Physics, Ochanomizu University, 2-1-1 Otsuka, Bunkyo-ku, Tokyo 112-8610, Japan}

\begin{abstract}
A generalized Kullback-Leibler relative entropy is introduced starting with 
the symmetric Jackson derivative of the generalized overlap between two probability 
distributions. The generalization retains much of the structure possessed by the original formulation. 
We present the fundamental properties including positivity, metricity, concavity, bounds and stability. 
In addition, a connection to shift information and behavior under Liouville dynamics are discussed.
\medskip
\end{abstract}
\pacs{05.90.+m, 89.20.-a, 89.70.-a, 89.70.Cf}
\maketitle
\bigskip

{\bf Keywords:} Information distances; stability; perturbation; 
overlaps.

\section{Introduction}
The relative entropy or the information divergence is a measure of the extent to which the assumed 
probability distribution deviates from the true one. We need a means of comparing 
two different probability distributions and will define a distance as a fundamental 
quantity that discriminates the distributions. The form of the relative entropy, which was first introduced 
by Kullback and Leibler \cite{KL} is the most pervasive measure in information theory \cite{Khinchin1} and 
statistical mechanics \cite{Khinchin2}. Its most prominent property lies in its asymmetry between two distributions 
(i.e., under interchange of the two) and that it does not satisfy the triangle inequality \cite{Cover}. 
Recently, the parametrized entropy has gained a great deal of attention in physics and information literature \cite{Tent}, 
in an effort to gain deeper understanding of the structure of equilibrium statistical mechanics and improved perspective 
of information theory. Due to the close connection to entropy and relative entropy, a generalized Kullback-Leibler (KL) 
relative entropy was presented, whose form is in conformity with a generalized entropy \cite{Borl,Tsallis}.  
In terms of the information gain, the generalization of the KL relative entropy has led to an adoption of a 
generalized information content.\\

The purpose of this paper is to introduce another extended KL divergence and to investigate its 
fundamental properties. Our construction of the new divergence measure is clear and origin of the 
form of the already existing generalization \cite{Borl,Tsallis} can be explained, once we realize that a seed 
quantity is a generalized overlap of the two distributions. The fundamental properties of the newly introduced 
generalization of KL that we treat in this paper include positivity, metricity, form invariance, concavity, upper and 
lower bounds and stability. In addition, the quantum no-cloning theorem \cite{No} has recently been shown to possess a 
classical counterpart, where universal perfect cloning machines are incompatible with the conservation of distance measure under 
the Liouville dynamics governing the evolution of the statistical ensemble. This fact was first shown through the ordinary 
KL divergence \cite{Daff} and was afterward extended to the Csisz{\' a}r f-divergence \cite{Plastino}. 
Furthermore it was shown that this fact also applies to a non f-divergence type \cite{Yamano1}. As a particular 
instance of the f-divergence, we shall specifically show that our generalized measure also exhibits constancy 
under this linear evolution dynamics. \\

The organization of the paper is as follows. In section II, we first recapitulate the basic properties of 
two different quantities: a distance in terms of KL, and the overlap of the distributions. We examine a specific 
stability property in order to clarify the difference between KL and the overlap. This property will be investigated 
for our measure in the subsequent section. In section III, we present our generalization by way of the asymmetric Jackson 
derivative. Some basic properties are addressed in section IV. We summarize our results in the final section.
\section{Distance and overlap}
The discrimination between two different probability functions is important in physics and information theory. 
To gain more insights into how we can measure the difference, we first consider the relationship between the 
KL distance and the overlap, which is also called the fidelity. Suppose that $n$ statistically independent subsystems 
constitute a system so that the joint probability distribution can be written in a factorized form 
$\mathcal{P}_m=\mathcal{P}_m^{(1)}\cdots \mathcal{P}_m^{(n)}$, where $m=1,2$. 
The KL distance $\mathcal{K}(\mathcal{P}_1,\mathcal{P}_2)$ between $\mathcal{P}_1$ and $\mathcal{P}_2$ 
defined on continuous support with $d{\bf x}=dx^{(1)}\cdots dx^{(n)}$ is 
\begin{eqnarray}
\int d{\bf x} \mathcal{P}_1\ln\frac{\mathcal{P}_1}{\mathcal{P}_2}&=& 
\int d{\bf x}\mathcal{P}_1^{(1)}\cdots \mathcal{P}_1^{(n)}\left(
\ln\frac{\mathcal{P}_1^{(1)}}{\mathcal{P}_2^{(1)}}+\cdots +
\ln\frac{\mathcal{P}_1^{(n)}}{\mathcal{P}_2^{(n)}}\right)\nonumber\\
&=&\sum_{j=1}^n\mathcal{K}(\mathcal{P}_1^{(j)},\mathcal{P}_2^{(j)}).
\end{eqnarray}
On the other hand, the overlap $\mathcal{O}(\mathcal{P}_1,\mathcal{P}_2)$ between 
$\mathcal{P}_1$ and $\mathcal{P}_2$, is comprised of the overlaps of the subsystems and is expressed as
\begin{eqnarray}
\int d{\bf x}\sqrt{\mathcal{P}_1\mathcal{P}_2} &=& \int d{\bf x}
\sqrt{\mathcal{P}_1^{(1)}\cdots \mathcal{P}_1^{(n)}\cdot 
\mathcal{P}_2^{(1)}\cdots \mathcal{P}_2^{(n)}}\nonumber\\
&=& \prod_{j=1}^n \mathcal{O}(\mathcal{P}_1^{(j)},\mathcal{P}_2^{(j)}). 
\end{eqnarray}
The KL distance is a sum of distances of the independent component systems 
(decomposability property), while the overlap for the total system is constructed from a product of 
the overlaps of subsystems. We note that as a closely related measure to the overlap, the statistical 
distance has been introduced by \cite{Woo}, and is based on the number of distinguishable states between 
two probabilities and can be given as an inverse cosine of the overlap as $\cos^{-1}\mathcal{O}(\mathcal{P}_1,\mathcal{P}_2)$.  
We used the continuous form in the above, however subtleties exist between the continuous and discrete form of entropies 
as clearly stated in \cite{Jumarie}. We shall use both forms depending on the ease of presentation in the rest of this paper. 
\subsection{Stability}\label{stab}
Stability in general has a quite broad meaning and has various definitions. The stability depends on the  
degrees of responses to the external perturbation. We shall now consider a situation where an external environment 
disturbs a system described by a set of probabilities. As a consequence of the disturbance, only a specific state 
of the system may slightly change the probability, say, by a factor $\epsilon$. Alternatively, we could also 
describe our set up as follows. The fluctuation of the target system is so small that its influence on the probability states 
could be limited and it appears only between two states. Due to the normalization of the entire probability, 
if the probability of a certain state is altered, then another state is also changed. 
This may be a matter of time scales inherent to the system. Although the fluctuation may initially occur locally in 
states space, it propagates in the neighboring states, and the reconfiguration of the probability distribution of the 
system occurs immediately towards a (quasi) equilibrium or static states of the system. Instead of considering 
the long-term stability, we limit our concern to a very early stage of the response. The long term dynamical stability 
requires the introduction of an underlying physics and is out of scope of the present treatment.\\

We define $\{p(x_i,\epsilon)\}$ as the distributions after an infinitesimal change denoted by a factor 
$\epsilon$, which is assumed to be close to unity. The evolution of the system is attributed to the change of the 
probabilities in time. Hence, two distributions $\{p(x_i)\}$ and $\{p(y_i)\}$ are assumed to be connected by 
$p(y_i)=\sum_{k=1}^np(y_i|x_k)p(x_k)$, i.e., the linear transformation of one into another state \cite{Vedral}. 
We may also regard this in the context of information theory as a transmission of input states 
$p(x_i)$'s under the channel matrix $p(y_i|x_k)$ to obtain output states $p(y_i)$. Then, the output states 
that received the disturbance are expressed as $p(y_i;\epsilon)=\sum_{k=1}^np(y_i|x_k)p(x_k;\epsilon)$. 
Without loss of generality, we assume that the fluctuation affects only two states ($l$th and $m$th) in 
such a way that \cite{ty}
\begin{equation}
p(x_k;\epsilon)=\left\{
\begin{array}{rl}
\epsilon p(x_l) & \mbox{for $k=l$} \\
(1-\epsilon)p(x_l)+p(x_m) & \mbox{for $k=m$}\\
p(x_k) & \mbox{for others}
\end{array}
\right..\label{eqn:pdef}
\end{equation}
This appears to correspond to the situation that a certain external fluctuation boosts the 
visiting frequency of a particular state. 
From the above, we have $p(y_i;\epsilon)=p(y_i)+c_i(\epsilon -1)p(x_l)$, where we have set 
$c_i=p(y_i\mid x_l)-p(y_i\mid x_m)$. Let us put $\epsilon-1=\xi$. We then have 
\begin{equation}
\mathcal{K}(\{p(y_i;\xi)\},\{p(y_i)\})=\sum_{i=1}^n\left[p(y_i)+\xi c_ip(x_l)\right]
\ln\left[1+\xi c_i\frac{p(x_l)}{p(y_i)}\right].
\end{equation}
Since we are considering the case $\xi\ll1$, then expanding the logarithm and considering above 
up to the second order in $\xi$, we have 
\begin{equation}
\mathcal{K}(\{p(y_i;\xi)\},\{p(y_i)\})=(\sum_{i=1}^nc_i)p(x_l)\xi+
\left(\sum_{i=1}^n\frac{c_i^2}{p(y_i)}\right)\frac{p^2(x_l)}{2}\xi^2+O(\xi^3).\label{eqn:KLs}
\end{equation} 
Therefore, we have always the positive second derivative $\partial^2\mathcal{K}/\partial \xi^2>0$, 
which means that the distance is stable under this disturbance.
On the other hand, the overlap between $p(y_i;\epsilon)$ and $p(y_i)$ are calculated to be 
\begin{eqnarray}
\mathcal{O}(\{p(y_i;\xi)\},\{p(y_i)\}) &=& \sum_{i=1}^n \sqrt{p(y_i;\xi)p(y_i)}\nonumber\\
&=& \sum_{i=1}^n p(y_i)\sqrt{1+\xi c_i\frac{p(x_l)}{p(y_i)}}\nonumber\\
&\sim& \sum_{i=1}^n p(y_i)\left[ 1+\frac{1}{2}c_i\frac{p(x_l)}{p(y_i)}\xi-
\frac{1}{8}c_i^2\left(\frac{p(x_l)}{p(y_i)}\right)^2\xi^2\right], 
\end{eqnarray}
where we have approximated the last line using $\sqrt{1+a\xi}\sim 1+a\xi/2-a^2\xi^2/8+\cdots$. 
Therefore, the second derivative $\partial^2 \mathcal{O}/\partial \xi^2=-p(x_l)^2(\sum_ic_i^2/p(y_i))/4<0$, 
which implies instability in this framework. Note that for two identical distributions, the KL distance vanishes while the 
overlap becomes unity. Therefore, the impact of the fluctuating effect on the two distance measures appears 
in the coefficients of $\xi^n (n>1)$.     
\section{A generalized Kullback-Leibler entropy}
The relative entropy can be arbitrarily defined and therefore it is possible to introduce alternative 
definitions to the conventional KL if needed. Some classes are actually discussed in \cite{Kapur}. 
Although the extensions of the usual KL entropy were already proposed by several authors in different 
forms, their presentation are somewhat heuristic and the mathematical origins are not fully clear. 
In this section, we consider a generalization of the KL entropy in light of the Jackson derivative \cite{Jack} 
and illustrate some of its properties in the next section. The Jackson derivative has its root in quantum group 
theory and has already been used to produce the Tsallis type generalized entropy \cite{Abe}. The Jackson derivative 
of a function $f(x)$ is defined for $s\ne 1$ by 
\begin{equation}
\frac{d}{d_s \alpha}f(\alpha):=\frac{f(s\alpha)-f(\alpha)}{s\alpha -\alpha}.
\end{equation}
The case $s=1$ corresponds to the ordinary derivative. The generalized entropy of Tsallis \cite{Tsallis} is obtained 
when the derivative is operated to a quantity $Z(\alpha)=\sum_ip_i^\alpha$ and evaluated at $\alpha=1$ \cite{Abe}, i.e., 
\begin{eqnarray}
\frac{d}{d_s\alpha}Z(\alpha)\bigg|_{\alpha=1}=\frac{1-\sum_i{p_i^s}}{s-1}.
\end{eqnarray} 
Keep in mind that if we operate the derivative $d/d_s \alpha$ on another quantity 
and evaluate it at different values of $\alpha$, we can in principle obtain different types of 
generalized entropies. In this scheme, we shall employ a quantity 
$Z^\prime(\alpha)=\sum_ip_i^\alpha q_i^{1-\alpha}$ for obtaining a new class of generalized KL entropy.
This quantity was called the R{\'e}nyi overlap of order $\alpha$ \cite{Fuchs} because 
a quantity $\ln Z(\alpha)/(\alpha-1)$ is defined by R{\'e}nyi \cite{Renyi,Renyi2}. 
We note that the usual KL entropy is obtained by the ordinary derivative of $Z^\prime(\alpha)$ 
when evaluated either at $\alpha=1$,  
\begin{eqnarray}
\frac{d Z^\prime (\alpha)}{d_1\alpha}\bigg|_{\alpha=1}
=\mathcal{K}(\mathcal{P},\mathcal{Q}),
\end{eqnarray}
or at $\alpha=0$,
\begin{eqnarray}
\frac{d Z^\prime (\alpha)}{d_1\alpha}\bigg|_{\alpha=1}
=-\mathcal{K}(\mathcal{Q},\mathcal{P}).
\end{eqnarray}
By the same token, the generalized KL entropy introduced previously in \cite{Tsallis,Borl} is generated by 
the operation of $d/d_s\alpha$ to $Z^\prime(\alpha)$ and substituting $\alpha=1$, 
\begin{eqnarray}
\mathcal{K}_s(\mathcal{P},\mathcal{Q})=\frac{dZ^\prime (\alpha)}{d_s\alpha}\bigg|_{\alpha=1}
=\frac{\sum_ip_i^sq_i^{1-s}-1}{s-1}.
\end{eqnarray}
These facts provide an indication that we can produce various kinds of generalized KL entropies 
by evaluating the Jackson derivative of $Z^\prime(\alpha)$ at different values of $\alpha$. 
It is also possible to take another approach to achieve a generalization by employing the 
symmetric Jackson derivative defined for a function $g(\alpha)$ with $s\ne 1$
\begin{eqnarray}
\mathscr{D}_{s;\alpha}[g(\alpha)]:=\frac{g(s\alpha)-g(s^{-1}\alpha)}{(s-s^{-1})\alpha}.
\end{eqnarray}
This derivative is symmetric under the interchange of $s\leftrightarrow s^{-1}$. 
We operate the symmetric Jackson derivative on $Z^\prime(\alpha)$ and evaluate it at $\alpha=1$,
\begin{eqnarray}
\mathcal{L}_s(\mathcal{P},\mathcal{Q}) &:=& \mathscr{D}_{s;\alpha}\left[\sum_ip_i^\alpha q_i^{1-\alpha}\right]
\bigg|_{\alpha=1}\nonumber\\
&=& \frac{1}{s-s^{-1}}\sum_ip_i\left[ \left(\frac{q_i}{p_i}\right)^{1-s}- 
\left(\frac{q_i}{p_i}\right)^{1-s^{-1}}\right].
\end{eqnarray}
We note that this generalized KL entropy is asymmetric $\mathcal{L}_s(\mathcal{P},\mathcal{Q})\neq 
\mathcal{L}_s(\mathcal{Q},\mathcal{P})$ and the relation $\mathcal{L}_{s^{-1}}(\mathcal{P},\mathcal{Q})
=\mathcal{L}_s(\mathcal{P},\mathcal{Q})$ holds. A symmetric quantity can be constructed by adding  
$\mathcal{L}_s(\mathcal{P},\mathcal{Q})$ and $\mathcal{L}_s(\mathcal{Q},\mathcal{P})$.  
In the limit $s\rightarrow 1$, $\mathcal{L}_s(\mathcal{P},\mathcal{Q})$ reduces to the usual KL entropy 
$\mathcal{K}(\mathcal{P},\mathcal{Q})$, which can be easily checked by the L'Hospital theorem. 
This divergence is well-defined whenever the two distributions have common support (state number i's). 
In other words, in order to have a finite value as a distance measure in the case $0<s<1$, the probability $p_i$ 
must vanish when $q_i$ vanishes, and similar restrictions also apply for $s>1$ and in the limit 
$s\to 1$.  
The decomposability property in the sense that we mentioned in section II is not expected for $\mathcal{L}_s$, 
since 
\begin{equation}
(s-s^{-1})\mathcal{L}_s(\mathcal{P}_1^{(1)}\cdots \mathcal{P}_1^{(n)},\mathcal{P}_2^{(1)}\cdots \mathcal{P}_2^{(n)})
=\prod_{j=1}^n L_s^{(j)}-\prod_{j=1}^n L_{s^{-1}}^{(j)}, 
\end{equation}
where $L_s^{(j)}=\int d{\bf x}^{(j)}[\mathcal{P}_1^{(j)}]^s[\mathcal{P}_2^{(j)}]^{1-s}$ etc. 
\cite{cmt}.
It would be interesting to note that $\mathcal{L}_s$ can be understood from the limiting case of 
the weighted power mean of order $\lambda$, which is defined for $x,y>0$ as 
\begin{eqnarray}
E_s^\lambda[x,y]:=[sx^\lambda+(1-s)y^\lambda]^\frac{1}{\lambda},
\end{eqnarray}
and the particular instances $E_{\frac{1}{2}}^0[x,y]$ and $E_{\frac{1}{2}}^1[x,y]$ 
correspond to the geometric mean $\sqrt{xy}$ and to the arithmetic mean $(x+y)/2$, respectively. Therefore we have
\begin{eqnarray}
\mathcal{L}_s(\mathcal{P},\mathcal{Q})=\lim_{\lambda\to 0}\frac{E_s^\lambda[p_i,q_i]
-E_{s^{-1}}^\lambda[p_i,q_i]}{s-s^{-1}}.
\end{eqnarray}
\section{Some properties of $\mathcal{L}_s(\mathcal{P},\mathcal{Q})$}
\subsection{Positive semi-definiteness}\label{posit}
This property corresponds to the information inequality for the standard KL entropy, i.e., 
$\mathcal{K}(\mathcal{P},\mathcal{Q})\geqslant 0$. 
For $\mathcal{L}_s(\mathcal{P},\mathcal{Q})$, the kernel function is   
$f(x)=(x^{1-s}-x^{1-s^{-1}})/(s-s^{-1})$. The second derivative 
\begin{eqnarray}
f^{''}(x)=\frac{1}{s-s^{-1}}\left\{ s(s-1)x^{-1-s}+\frac{1}{s}(1-\frac{1}{s})x^{-1-\frac{1}{s}}\right\}
\end{eqnarray}
is always positive for $s>0$, zero for $s=0$ and can be negative for $s<0$. 
Therefore, due to the Jensen's inequality $\sum_i\alpha_if(x_i)\gtreqless f(\sum_i\alpha_ix_i)$ for 
$f^{''}(x)\gtreqless 0$ with $\sum_i\alpha_i=1$, by putting $x_i=q_i/p_i$ we obtain 
\begin{eqnarray}
\mathcal{L}_s(\mathcal{P},\mathcal{Q})&\gtreqless& \frac{1}{s-s^{-1}}\left\{ 
\left[ \sum_ip_i\left( \frac{q_i}{p_i}\right)\right]^{1-s}-
\left[ \sum_ip_i\left( \frac{q_i}{p_i}\right)\right]^{1-s^{-1}}\right\}\nonumber\\
&=& 0.
\end{eqnarray}
Note that for $s\neq 0$, the last equality holds iff $p_i=q_i$, $\forall i$. 
We note that when $s>0$, $f$ is a convex function. Accordingly the positivity $\mathcal{L}_s\geqslant 0$ 
is found to be a direct consequence of Lemma 1.1 in \cite{Csis}, where an inequality 
\begin{eqnarray}
\int_{E}p(x)f\left( \frac{q(x)}{p(x)}\right)m(dx)\geqslant \int_{E}p(x)m(dx)\cdot f(u_0)\label{eqn:Csis}
\end{eqnarray}
is proved for nonnegative measurable functions on a measure space $(X,\mathfrak{X},m)$ for 
$E\in \mathfrak{X}$ ($\sigma$-algebra of subsets of $X$) and $u_0=\int_{E}q(x)m(dx)/\int_{E}p(x)m(dx)$. 
The normalization of the probability functions in Eq.(\ref{eqn:Csis}) when $f $ satisfies $f(1)=0$ results in 
positivity for our case.
\subsection{Metric property}
The infinitesimal shift in probability provides 
\begin{eqnarray}
\mathcal{L}_s(\mathcal{P},\mathcal{P}+d\mathcal{P})=\sum_ip_if\left( 1+\frac{\delta p_i}{p_i}\right)
\approx \sum_{n=2}^\infty\frac{f^{(n)}(1)}{n!}\sum_i\frac{(\delta p_i)^n}{p_i^{n-1}},
\end{eqnarray}
where $f(x)$ is the same as above, and we have used the facts $f(1)=0$ and $\sum_i\delta p_i=0$. 
Since the second derivative is $f^{''}(1)=[s(s-1)+s^{-1}(1-s^{-1})]/(s-s^{-1})$ 
and we have $\mathcal{L}_s(\mathcal{P},\mathcal{P}+d\mathcal{P})\sim 2^{-1}f^{''}(1)\sum_i(\delta p_i)^2/p_i$, 
if we introduce the information metric (or Fisher-Rao metric) $ds^2$ as
\begin{eqnarray}
ds^2=\sum_{i,j}g_{ij}dp^idp^j,
\end{eqnarray}
then the metric tensor $g_{ij}$ is given by
\begin{eqnarray}
g_{ij}=\frac{s(s-1)+s^{-1}(1-s^{-1})}{2p_i(s-s^{-1})}\delta_{ij}.
\end{eqnarray}
In the limit $s\to 1$, this metric reduces to $\delta_{ij}/(2p_i)$. 
\subsection{Form invariance}
Under transformation of coordinates $\theta\to \eta$, the distribution $p(\theta)$ may satisfy   
$p(\theta)d\theta=\phi(\eta)d\eta$ where $\phi(\eta)$ is the converted distribution. 
The resulting relation $p_1/p_2=\phi_1/\phi_2$ shows that the distance between $\phi_1$ and $\phi_2$ 
measured by $\mathcal{L}_s$ remains unchanged, that is, the distance is equivalent to 
the one between $p_1$ and $p_2$ before the transformation,
\begin{eqnarray}
\mathcal{L}_s(\mathcal{P}_1,\mathcal{P}_2)=\frac{1}{s-s^{-1}}\int d\eta \phi_1(\theta(\eta)) 
\left[\left(\frac{\phi_2(\theta(\eta))}{\phi_1(\theta(\eta))}\right)^{1-s}
-\left(\frac{\phi_2(\theta(\eta))}{\phi_1(\theta(\eta))}\right)^{1-s^{-1}}\right]
=\mathcal{L}_s(\Phi_1,\Phi_2).
\end{eqnarray}
\subsection{Concavity}
By setting $\alpha_j=a_j/\sum_k a_k$ and $x_j=b_j/a_j$ for the 
Jensen's inequality $\sum_j \alpha_jf(x_j)\geqslant f(\sum_j\alpha_j x_j)$ with the same $f(x)$ as in 
the section \ref{posit}, we have 
\begin{eqnarray}
\frac{1}{\sum_k a_k}\frac{\sum_j a_j\left[\left(\frac{b_j}{a_j}\right)^{1-s}-
\left(\frac{b_j}{a_j}\right)^{1-s^{-1}}\right]}{s-s^{-1}}
\geqslant 
\frac{\left[\left(\frac{\sum_jb_j}{\sum_ka_k}\right)^{1-s}-
\left(\frac{\sum_jb_j}{\sum_ka_k}\right)^{1-s^{-1}}\right]}{s-s^{-1}}.\label{eqn:Lqineq}
\end{eqnarray}
We consider two states $j=1,2$ by putting $a_1=\lambda p_j^1$ and $a_2=(1-\lambda)p^2_j$ 
to obtain $p=\lambda p_j^1+(1-\lambda)p^2_j$. 
Similarly, we set $b_1=\lambda p_j^{\prime 1}$ and $b_2=(1-\lambda)p_j^{\prime 2}$ for 
$p^\prime=\lambda p_j^{\prime 1}+(1-\lambda)p_j^{\prime 2}$, where $0\leqslant\lambda\leqslant 1$. 
Substituting these into Eq.(\ref{eqn:Lqineq}), and summing over $j$ yields
\begin{eqnarray}
\lambda \mathcal{L}_s(p^1,p^{\prime 1})+(1-\lambda)\mathcal{L}_s(p^2,p^{\prime 2})
\geqslant
\mathcal{L}_s(\lambda p^1+(1-\lambda)p^2,\lambda p^{\prime 1}+(1-\lambda)p^{\prime 2}).
\end{eqnarray}
This completes the proof.  
\subsection{Stability}\label{stab2}
We investigate the stability property explained in section \ref{stab}. The distance measure between 
the original and the perturbed distribution is 
\begin{eqnarray}
\mathcal{L}_s(\{p(y_i;\xi)\},\{p(y_i)\})&=&\frac{1}{s-s^{-1}}\sum_{i=1}^n\left\{p(y_i)+c_ip(x_l)\xi\right\}\nonumber\\
&\times& \left[\left(\frac{p(y_i)+c_ip(x_l)\xi}{p(y_i)}\right)^{s-1}-
\left(\frac{p(y_i)+c_ip(x_l)\xi}{p(y_i)}\right)^{s^{-1}-1}\right].
\end{eqnarray}
Expanding $(1+\xi c_ip(x_l)/p(y_i))^{s-1}$ etc. with respect to $\xi$ and taking terms up to second 
order in $\xi$, we obtain the expression
\begin{eqnarray}
\mathcal{L}_s(\{p(y_i;\xi)\},\{p(y_i)\})=(\sum_{i=1}^nc_i)p(x_l)\xi+
\frac{f(s)}{2}\left(\sum_{i=1}^n\frac{c_i^2}{p(y_i)}\right)p^2(x_l)\xi^2+O(\xi^3),
\end{eqnarray}
where $f(s)=s-s^{-1}-1$. Considering the sign of the coefficient of $\xi^2$, we can conclude that the 
$\mathcal{L}_s$ is stable when $s>0$ and unstable when $s<0$. Note that except for the factor $f(s)$, 
the effect of perturbation on the generalized distance is the same as on the 
ordinary KL Eq.(\ref{eqn:KLs}) up to second order in $\xi$. 
\subsection{Upper and lower bounds for $\mathcal{L}_s(\mathcal{P},\mathcal{Q})$}
The following bounds hold.
\begin{Theorem}
Let $p(x)$, $q(x)\in \mathcal{X}$, be two probability distributions. Then we have the inequality:
\begin{eqnarray}\label{eq1}
\frac{1}{2}\sum_{x\in\mathcal{X}}\left[\left(\frac{q(x)}{p(x)}\right)^{1-s}+
\left(\frac{q(x)}{p(x)}\right)^{1-s^{-1}}\right] p(x)\log \frac{p(x)}{q(x)}
&\leqslant& 
\mathcal{L}_s(\mathcal{P},\mathcal{Q})\nonumber\\
&\leqslant& 
\sum_{x\in\mathcal{X}}\left(\frac{q(x)}{p(x)}\right)^{1-\frac{s+s^{-1}}{2}}\!\!\!\!\!p(x)\log \frac{p(x)}{q(x)},
\end{eqnarray}
\end{Theorem}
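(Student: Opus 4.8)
The plan is to reduce the two-sided estimate to a single convexity statement and then invoke the Hermite--Hadamard inequality. By the symmetry $\mathcal{L}_{s^{-1}}=\mathcal{L}_s$ noted above I may take $s>1$ as a representative case (the reciprocal range $0<s<1$ follows by the same symmetry). Setting $a=1-s$ and $b=1-s^{-1}$ gives $b-a=s-s^{-1}>0$ and midpoint exponent $\frac{a+b}{2}=1-\frac{s+s^{-1}}{2}$. Writing $t=q(x)/p(x)$ for the likelihood ratio at each $x$, the summand of $\mathcal{L}_s$ is $\frac{p(x)}{b-a}\left(t^{a}-t^{b}\right)$, and the algebraic key is the elementary integral representation
\begin{equation}
t^{a}-t^{b}=\log\frac{p(x)}{q(x)}\int_{a}^{b}t^{c}\,dc ,
\end{equation}
which recasts each term as $p(x)\log\frac{p(x)}{q(x)}$ multiplied by the mean value $\frac{1}{b-a}\int_{a}^{b}t^{c}\,dc$ of the function $c\mapsto t^{c}$ over $[a,b]$.

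Next I would apply the Hermite--Hadamard inequality to $\phi(c)=t^{c}$, which is convex because $\phi''(c)=t^{c}(\log t)^{2}\geqslant 0$. This sandwiches the mean value between the midpoint value and the average of the endpoint values,
\begin{equation}
t^{\frac{a+b}{2}}\leqslant \frac{1}{b-a}\int_{a}^{b}t^{c}\,dc\leqslant \frac{t^{a}+t^{b}}{2}.
\end{equation}
Substituting $a=1-s$ and $b=1-s^{-1}$ reproduces exactly the two flanking quantities of the statement: $t^{1-(s+s^{-1})/2}$ on one side and $\frac{1}{2}\left(t^{1-s}+t^{1-s^{-1}}\right)$ on the other. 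Multiplying through by the weight $p(x)\log\frac{p(x)}{q(x)}$ and summing over $x\in\mathcal{X}$ is then intended to transfer this sandwich onto $\mathcal{L}_s(\mathcal{P},\mathcal{Q})$.

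The step I expect to be the main obstacle is precisely this final multiplication. The weight $p(x)\log\frac{p(x)}{q(x)}$ is not sign-definite: it is positive on states with $p(x)>q(x)$ and negative on states with $p(x)<q(x)$. On the latter the Hermite--Hadamard sandwich is reversed after multiplication, so a termwise passage does not by itself fix a single direction for the summed inequality, and the orientation of the bound is genuinely controlled by the balance between these two families of terms. The crux is therefore to handle the signed contributions together rather than term by term. I would first test the regime $q\approx p$, where $\log(p/q)$ is uniformly small and the second-order expansion used in Sec.~\ref{stab2} can pin down the direction, and only then attempt the general signed estimate; I expect this sign bookkeeping, not the convexity input, to be where the real work lies.
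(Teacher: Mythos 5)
Your machinery is exactly the paper's own: the paper applies the Hermite--Hadamard inequality to the convex function $f(u)=t^{1-u}$ on the interval with endpoints $s$ and $s^{-1}$, which under the substitution $u=1-c$ is precisely your $\phi(c)=t^{c}$ on $[1-s,\,1-s^{-1}]$, and it identifies the Hermite--Hadamard mean value with the summand of $\mathcal{L}_s$ through the same integral representation you wrote down. The only remaining step in the paper is the one you refused to take: it multiplies the pointwise sandwich by $p(x)\log\left(p(x)/q(x)\right)$ and sums over $x$, treating that weight as if it had a fixed sign (its derivation is explicitly restricted to $t=q(x)/p(x)\in(0,1)$, i.e.\ to states with $q(x)<p(x)$). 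Your objection is exactly right, and it is fatal rather than technical: since both distributions are normalized, $\mathcal{P}\neq\mathcal{Q}$ forces $p>q$ on some states and $p<q$ on others, the weighted sandwich points in opposite directions on the two families, and neither direction survives the summation. So the ``sign bookkeeping'' you identified as the real work is not a gap that a cleverer argument along these lines could close; it is the precise point at which the paper's own proof is invalid.

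In fact the statement itself is false, so no repair is possible. Take $\mathcal{X}=\{1,2\}$, $p=(0.6,0.4)$, $q=(0.4,0.6)$ and $s=2$, so that $1-s=-1$, $1-s^{-1}=\frac{1}{2}$ and $1-\frac{s+s^{-1}}{2}=-\frac{1}{4}$. With natural logarithms,
\begin{equation}
\frac{1}{2}\sum_{x}\left[\left(\frac{q}{p}\right)^{-1}+\left(\frac{q}{p}\right)^{1/2}\right]p\log\frac{p}{q}\approx 0.1284,\qquad
\mathcal{L}_{2}(\mathcal{P},\mathcal{Q})\approx 0.1246,\qquad
\sum_{x}\left(\frac{q}{p}\right)^{-1/4}p\log\frac{p}{q}\approx 0.1227,
\end{equation}
so the claimed lower bound exceeds $\mathcal{L}_{2}$ and the claimed upper bound lies below it: both inequalities of the theorem are violated, and in fact hold in the reversed direction (the direction dictated by the dominant state, which here has $p>q$ and hence positive weight). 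Two further remarks on your plan. First, your fallback of expanding in the regime $q\approx p$ cannot decide the orientation: writing $q_x=p_x(1+\varepsilon u_x)$ with $\sum_x p_xu_x=0$, all three quantities in the theorem equal $\frac{s+s^{-1}-1}{2}\,\varepsilon^{2}\sum_x p_xu_x^{2}+O(\varepsilon^{3})$, so the bounds only separate at third order, which is why a finite-separation example such as the one above is needed. Second, even the reversed inequality, which passes these numerical tests, does not follow termwise for the same sign reason; proving it (if it is true in general) would require a genuinely global argument, e.g.\ Jensen's inequality applied to the pointwise differences, which vanish at $t=1$, rather than state-by-state Hermite--Hadamard. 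In short: your attempt is incomplete exactly where you said it was, and it has to be, because the theorem as stated does not hold.
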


\begin{proof}
For a convex function $f(u)=t^{1-u}$ ($0<t<1$), we employ the Hermite-Hadamard inequality which holds 
for convex functions,
\begin{equation}\label{pr1-eq1}
f(\frac{a+b}{2})\leqslant \frac{1}{b-a}\int^{b}_{a}f(u)du\leqslant \frac{f(a)+f(b)}{2},\quad (a,b >0).
\end{equation}
Now, putting $a=s$ and $b=s^{-1}$ we have an inequality
\begin{equation}
t^{1-\frac{s+s^{-1}}{2}}\leqslant \frac{t^{1-s}-t^{1-s^{-1}}}{(s-s^{-1})\log t}\leqslant 
\frac{1}{2}(t^{1-s}+t^{1-s^{-1}}).\label{eqn:pr1-eq2}
\end{equation}
Since $(t^{1-s}-t^{1-s^{-1}})/(s-s^{-1})$ takes negative values $\forall s$ for $t\in (0,1)$,
if we choose $t=q(x)/p(x)$ in equation (\ref{eqn:pr1-eq2}) and sum over $x\in \mathcal{X}$ after 
multiplying by $p(x)$, we obtain 
\begin{eqnarray}
\frac{1}{2}\sum_{x\in\mathcal{X}}\left[\left(\frac{q(x)}{p(x)}\right)^{1-s}+
\left(\frac{q(x)}{p(x)}\right)^{1-s^{-1}}\right]p(x)\log \frac{p(x)}{q(x)} 
&\leqslant& 
\frac{1}{s-s^{-1}}\sum_{x\in\mathcal{X}}p(x)\left[\left(\frac{q(x)}{p(x)}\right)^{1-s}-
\left(\frac{q(x)}{p(x)}\right)^{1-s^{-1}}\right]\nonumber\\
&\leqslant& 
\sum_{x\in\mathcal{X}}\left(\frac{q(x)}{p(x)}\right)^{1-\frac{s+s^{-1}}{2}}\!\!\!\!p(x)\log \frac{p(x)}{q(x)}.
\end{eqnarray}
The equality holds if and only if $s=s^{-1}$, i.e., $s=\pm 1$, which completes the desired bounds for $\mathcal{L}_s$.
\end{proof}
As for an upper bound for $\mathcal{L}_s$, the following expression also holds.
\begin{Corollary}{}
Let $p(x)$, $q(x)$ be as above. For $-1\leqslant s \leqslant 0$ and $\quad s>1$, we have
\begin{equation}
0\leqslant \mathcal{L}_s(\mathcal{P},\mathcal{Q})\leqslant \frac{1}{s-s^{-1}}\sum_{x\in\mathcal{X}}\frac{|\exp[p^s(x)q^{1-s}(x)]-
\exp[p^{s^{-1}}(x)q^{1-s^{-1}}(x)]|}{\sqrt{\exp[p^s(x)q^{1-s}(x)+p^{s^{-1}}(x)q^{1-s^{-1}}(x)]}}.\label{eqn:coro1}
\end{equation}
\end{Corollary}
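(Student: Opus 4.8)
The plan is to reduce the Corollary to a single termwise inequality for the exponential function. I first rewrite the summand of $\mathcal{L}_s$ so that it matches the right-hand side of Eq.~(\ref{eqn:coro1}): since $p(x)(q(x)/p(x))^{1-s}=p^s(x)q^{1-s}(x)$, setting $u_x:=p^s(x)q^{1-s}(x)$ and $v_x:=p^{s^{-1}}(x)q^{1-s^{-1}}(x)$ gives $\mathcal{L}_s(\mathcal{P},\mathcal{Q})=(s-s^{-1})^{-1}\sum_x(u_x-v_x)$. On each of the admissible ranges, $s>1$ and $-1<s<0$, one checks at once that $s-s^{-1}>0$, so multiplying or dividing by this prefactor does not reverse inequalities; this is precisely why the two disjoint intervals are grouped together in the hypothesis.

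The crux is the pointwise bound $a-b\le |a-b|\le |e^{a}-e^{b}|/\sqrt{e^{a+b}}$, valid for all real $a,b$. I would prove the nontrivial right-hand inequality by the identity $|e^{a}-e^{b}|/e^{(a+b)/2}=|e^{(a-b)/2}-e^{-(a-b)/2}|=2|\sinh((a-b)/2)|$, which turns the claim into $|t|\le 2|\sinh(t/2)|$ with $t=a-b$, i.e.\ the elementary estimate $\sinh\tau\ge\tau$ for $\tau\ge 0$. (Equivalently, this is the logarithmic-mean inequality $\sqrt{AB}\le (A-B)/(\ln A-\ln B)$ read with $A=e^{a}$, $B=e^{b}$.)

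With this in hand the upper bound follows mechanically: apply the pointwise estimate with $a=u_x$ and $b=v_x$, use $u_x-v_x\le |u_x-v_x|$, divide by $s-s^{-1}>0$, and sum over $x\in\mathcal{X}$ to recover exactly the right-hand side of Eq.~(\ref{eqn:coro1}). The lower bound $0\le \mathcal{L}_s$ restates positive semi-definiteness: for $s>1$ the kernel $f$ is convex ($f''>0$ whenever $s>0$), so Jensen's inequality gives $\mathcal{L}_s\ge 0$ as in Section~\ref{posit}, and the symmetry $\mathcal{L}_{s^{-1}}=\mathcal{L}_s$ carries this to $0<s<1$ as well.

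The step I expect to be the main obstacle is the lower bound on the negative interval $-1\le s\le 0$, where the analysis of Section~\ref{posit} shows that $f''$ no longer keeps a fixed sign, so positivity cannot be read off from convexity. I would approach it through the secant-slope representation $\mathcal{L}_s=(Z^\prime(s)-Z^\prime(s^{-1}))/(s-s^{-1})$ of the convex function $Z^\prime(\alpha)=\sum_i p_i^\alpha q_i^{1-\alpha}$, which satisfies $Z^\prime(0)=Z^\prime(1)=1$ and whose derivative at $\alpha=0$ equals $-\mathcal{K}(\mathcal{Q},\mathcal{P})\le 0$; hence $Z^\prime$ is decreasing on $(-\infty,0]$. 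Since $s^{-1}\le s\le 0$ there, the secant slope $\mathcal{L}_s$ is non-positive on $[-1,0]$, so the stated lower bound does not follow from convexity and in fact appears to require either an additional restriction on $s$ or a correction of this range. Pinning down the correct hypothesis under which $0\le\mathcal{L}_s$ holds is the delicate point; the upper bound, by contrast, goes through cleanly via the $\sinh$ estimate above.
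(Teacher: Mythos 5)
Your upper-bound argument is, in substance, the paper's own proof. The paper invokes the geometric--logarithmic mean inequality $\sqrt{ab}\leqslant (b-a)/(\ln b-\ln a)$ with $b=\exp[p^s q^{1-s}]$ and $a=\exp[p^{s^{-1}}q^{1-s^{-1}}]$, which is exactly your identity $|e^{a}-e^{b}|/e^{(a+b)/2}=2|\sinh((a-b)/2)|$ together with $\sinh\tau\geqslant\tau$; it then sums over $x$, just as you do. One small point in your favor: the paper's intermediate step writes $(s-s^{-1})\mathcal{L}_s=\bigl|\sum_x(u_x-v_x)\bigr|$, which already presupposes positivity, whereas your chain $\sum_x(u_x-v_x)\leqslant\sum_x|u_x-v_x|$ needs no such input, so your derivation of the upper bound is valid on both ranges regardless of the sign of $\mathcal{L}_s$.

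The "delicate point" you flagged is not a gap in your proposal; it is a genuine error in the paper. The paper's proof dismisses the lower bound with the sentence "From the positivity property proved in section 5.1, we have $0\leqslant\mathcal{L}_s$," but that section establishes $f''>0$ (hence $\mathcal{L}_s\geqslant 0$ via Jensen) only for $s>0$, and the paper itself concedes that $f''$ can be negative for $s<0$. Your secant-slope argument is sound: $Z'(\alpha)=\sum_i p_i^\alpha q_i^{1-\alpha}$ is convex with $(Z')'(0)=-\mathcal{K}(\mathcal{Q},\mathcal{P})\leqslant 0$, so $Z'$ is non-increasing on $(-\infty,0]$, and since $s^{-1}\leqslant s\leqslant 0$ for $s\in[-1,0]$, the slope $\mathcal{L}_s=(Z'(s)-Z'(s^{-1}))/(s-s^{-1})$ is $\leqslant 0$ there, generically strictly. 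A concrete counterexample: $p=(1/2,1/2)$, $q=(1/3,2/3)$, $s=-1/2$ gives $Z'(-1/2)=\sqrt{2}\,[(1/3)^{3/2}+(2/3)^{3/2}]\approx 1.042$ and $Z'(-2)=4/3$, hence $\mathcal{L}_{-1/2}\approx(1.042-1.333)/1.5\approx -0.19<0$. So the corollary's left inequality is false on $-1\leqslant s<0$ and should be restricted to $s>1$ (or more generally $s>0$), while the upper bound survives on both ranges exactly as you showed. Your proposal is therefore correct, follows the paper's route where that route is valid, and is more careful than the paper on the negative range.
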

\begin{proof}
For $a\geqslant 0$ and $b\geqslant 0$, the geometric mean is smaller than or equal to the logarithmic mean, i.e., 
\begin{eqnarray}
\sqrt{ab}\leqslant \frac{b-a}{\ln b-\ln a}.
\end{eqnarray}
Equivalently the inequality $|\ln b-\ln a|\leqslant |b-a|/\sqrt{ab}$ holds for the equality iff $a=b$. Therefore, 
setting $b=\exp[p^s(x)q^{1-s}(x)]$ and $a=\exp[p^{s^{-1}}(x)q^{1-s^{-1}}(x)]$, we have
\begin{eqnarray}
\bigg|p^s(x)q^{1-s}(x)-p^{s^{-1}}(x)q^{1-s^{-1}}(x)\bigg|\leqslant\frac{|e^{p^s(x)q^{1-s}(x)}-e^{p^{s^{-1}}(x)q^{1-s^{-1}}(x)}|}
{\sqrt{\exp[p^s(x)q^{1-s}(x)+p^{s^{-1}}(x)q^{1-s^{-1}}(x)}]}.\label{eqn:T1}
\end{eqnarray}
From the positivity property proved in section 5.1, we have $0\leqslant \mathcal{L}_s(\mathcal{P},\mathcal{Q})$. Then 
\begin{eqnarray}
(s-s^{-1})\mathcal{L}_s(\mathcal{P},\mathcal{Q})&=&
\bigg|\sum_{x\in\mathcal{X}}p^s(x)q^{1-s}(x)-p^{s^{-1}}(x)q^{1-s^{-1}}(x)\bigg|\nonumber\\
&\leqslant& \sum_{x\in\mathcal{X}}\bigg|p^s(x)q^{1-s}(x)-p^{s^{-1}}(x)q^{1-s^{-1}}(x)\bigg|.
\end{eqnarray}
Summing over $x$ in Eq.(\ref{eqn:T1}), we obtain the inequality Eq.(\ref{eqn:coro1}).
\end{proof}
Another expression for the bound in terms of the $l$-norm is possible for the exponentiated differences.
\begin{Corollary}
Let $p(x)$, $q(x)$ be as above. Then we have 
\begin{equation}
0\leqslant \mathcal{L}_s(\mathcal{P},\mathcal{Q})\leqslant 
\Bigm\| e^{p^s(x)q^{1-s}(x)}-e^{p^{s^{-1}}(x)q^{1-s^{-1}}(x)}\Bigm\|_\alpha\cdot
\Bigm\|\exp[-p^s(x)q^{1-s}(x)-p^{s^{-1}}(x)q^{1-s^{-1}}(x)]\Bigm\|_{\frac{\beta}{2}}^{\frac{1}{2}}
\end{equation}
where $||t||_l:=\left[\sum_{x\in \mathcal{X}}|t(x)|^l\right]^{1/l}$ for $l>0$. 
\end{Corollary}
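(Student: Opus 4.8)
The plan is to recycle the pointwise estimate obtained in the proof of the previous corollary and then repackage the resulting sum as a product of norms by means of H\"older's inequality. Throughout I abbreviate $A(x)=p^s(x)q^{1-s}(x)$ and $B(x)=p^{s^{-1}}(x)q^{1-s^{-1}}(x)$. First I would recall that the logarithmic-mean inequality applied in the previous corollary already yields, for each $x\in\mathcal{X}$,
\begin{equation}
\bigl|A(x)-B(x)\bigr|\leqslant \bigl|e^{A(x)}-e^{B(x)}\bigr|\,\exp\!\left[-\frac{1}{2}\bigl(A(x)+B(x)\bigr)\right].
\end{equation}
The lower bound $0\leqslant\mathcal{L}_s(\mathcal{P},\mathcal{Q})$ is nothing but the positive semi-definiteness established in section \ref{posit}, so only the upper bound calls for an argument.

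For the upper bound I would proceed exactly as in the previous corollary: starting from the identity $(s-s^{-1})\mathcal{L}_s(\mathcal{P},\mathcal{Q})=\sum_{x}(A(x)-B(x))$, passing to absolute values, and using the triangle inequality together with the termwise pointwise estimate, I obtain
\begin{equation}
(s-s^{-1})\,\mathcal{L}_s(\mathcal{P},\mathcal{Q})\leqslant \sum_{x\in\mathcal{X}}\bigl|e^{A(x)}-e^{B(x)}\bigr|\,\exp\!\left[-\frac{1}{2}\bigl(A(x)+B(x)\bigr)\right].
\end{equation}
The genuinely new step is to view the summand as a product $f(x)g(x)$, with $f(x)=\bigl|e^{A(x)}-e^{B(x)}\bigr|$ and $g(x)=\exp[-\frac{1}{2}(A(x)+B(x))]$, and to invoke H\"older's inequality $\sum_x f(x)g(x)\leqslant\|f\|_\alpha\,\|g\|_\beta$ for a conjugate pair $\alpha,\beta$ with $\alpha^{-1}+\beta^{-1}=1$. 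The factor $\|f\|_\alpha$ is precisely the first norm in the claimed bound.

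It then remains only to rewrite $\|g\|_\beta$ in the stated form, which is a short bookkeeping of exponents. Since $g(x)=\bigl(\exp[-A(x)-B(x)]\bigr)^{1/2}\geqslant 0$, the definition $\|t\|_l=[\sum_x|t(x)|^l]^{1/l}$ gives
\begin{equation}
\|g\|_\beta=\left(\sum_{x\in\mathcal{X}}\bigl(\exp[-A(x)-B(x)]\bigr)^{\beta/2}\right)^{1/\beta}=\Bigl\|\exp[-A-B]\Bigr\|_{\beta/2}^{1/2},
\end{equation}
so that combining the last two displays reproduces the asserted inequality, with the prefactor $1/(s-s^{-1})$ entering exactly as it does in the preceding corollary. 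I do not expect a real obstacle here: the entire argument is a repackaging of the already-proven pointwise bound, and the one point to get right is the exponent accounting in the H\"older split, in particular the identity $\|g\|_\beta=\|g^2\|_{\beta/2}^{1/2}$ that is responsible for both the half-power and the $\beta/2$-norm appearing in the statement.
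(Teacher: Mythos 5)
Your proof follows essentially the same route as the paper's: the paper likewise starts from the sum appearing in the preceding corollary and applies H\"older's inequality with $1/\alpha+1/\beta=1$ to the product $|e^{A(x)}-e^{B(x)}|\cdot\exp[-\tfrac{1}{2}(A(x)+B(x))]$, the half-power and the $\beta/2$-norm arising exactly from your bookkeeping identity $\|g\|_\beta=\|g^2\|_{\beta/2}^{1/2}$. The one loose end, shared with the paper itself, is that this argument actually yields the upper bound with the prefactor $1/(s-s^{-1})$ carried over from the previous corollary, whereas the statement as printed omits it.
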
 
\begin{proof}
From the H{\" o}lder's inequality with $1/\alpha+1/\beta=1$, it immediately follows that
\begin{eqnarray}
& & \sum_{x\in\mathcal{X}} \frac{|e^{p^s(x)q^{1-s}(x)}-e^{p^{s^{-1}}(x)q^{1-s^{-1}}(x)}|}{\sqrt{\exp[p^s(x)q^{1-s}(x)+p^{s^{-1}}(x)q^{1-s^{-1}}(x)}]}\nonumber\\
&\leqslant& \left[\sum_{x\in\mathcal{X}}| e^{p^s(x)q^{1-s}(x)}-e^{p^{s^{-1}}(x)q^{1-s^{-1}}(x)}|^\alpha \right]^{1/\alpha}
\left[\sum_{x\in\mathcal{X}} \left(\frac{1}{\sqrt{\exp[p^s(x)q^{1-s}(x)+p^{s^{-1}}(x)q^{1-s^{-1}}(x)}]}\right)^\beta\right]^{1/\beta}.
\end{eqnarray}
\end{proof}
\section{Shift information for $\mathcal{L}_s(\mathcal{P},\mathcal{Q})$}
The notion of the shift information introduced in Ref.\cite{Vst} is an interesting means of investigating 
our new distance measure, in that we may ascribe the infinitesimal shift to 
known quantities. The original definition of the shift information can be expressed by using the ordinary KL 
entropy as $\mathcal{K}(p(\zeta),p(\zeta+\Delta))$, where the $\Delta$ is a sufficiently small quantity compared to 
the variable $\zeta$. As a consequence of the expansion, the Fisher information measure appears in 
the second order term in the case of the usual KL entropy \cite{Vst} and also in a generalized KL 
entropy \cite{Borl}. We would be able to expect that the shift information for the present generalization can also be 
expressible in terms of the Fisher information, where the generalization parameter $s$ should 
govern the degree of the shift. We look this fact below. The shift information is defined as 
\begin{eqnarray}
I(\Delta,s):=\mathcal{L}_s(p(\zeta),p(\zeta+\Delta))=\frac{1}{s-s^{-1}}\int d\zeta  
\left[ p^s(\zeta)[p(\zeta+\Delta)]^{1-s}-p^{s^{-1}}(\zeta)[p(\zeta+\Delta)]^{1-s^{-1}}\right].
\end{eqnarray}
Expanding $[p(\zeta+\Delta)]^{1-\gamma}$ with respect to $\Delta$, 
\begin{eqnarray}
p^\gamma(\zeta)[p(\zeta+\Delta)]^{1-\gamma}\sim p(\zeta)+(1-\gamma)p^\prime(\zeta)\Delta+
(1-\gamma)\left\{p^{''}(\zeta)-\gamma\frac{(p^\prime(\zeta))^2}{p(\zeta)}\right\}\frac{\Delta^2}{2}+\cdots,
\end{eqnarray}
where $\gamma$ denotes either $s$ or $s^{-1}$ and the prime implies the derivative with respect to $\zeta$.  
Then the shift information can be expressed up to the second order in $\Delta$ as 
\begin{eqnarray}
I(\Delta,s)=\frac{1}{s-s^{-1}}\int d\zeta\left\{ \left(s^{-1}-s\right)p^\prime(\zeta)\Delta+
\left[(s^{-1}-s)p^{''}(\zeta)+a(s)\frac{[p^\prime(\zeta)]^2}{p(\zeta)}\right]
\frac{\Delta^2}{2}\right\},\label{eqn:dIs}
\end{eqnarray}
where we have put $a(s)=s(s-1)+s^{-1}(1-s^{-1})$. Therefore we find that the Fisher information 
$\int d\zeta (p^\prime)^2/p$ is a relevant quantity to the second order in the shift $\Delta$.  
Moreover, the variation of $I(\Delta,s)$ is given by  
\begin{eqnarray}
\delta I(\Delta,s) &=&\frac{1}{s-s^{-1}}\int d\zeta\delta p\left[\frac{\partial}{\partial p}- 
\frac{\partial}{\partial\zeta}\frac{\partial}{\partial p^\prime}+\frac{\partial^2}{\partial\zeta^2}
\frac{\partial}{\partial p^{''}}-\cdots\right]\times\nonumber\\
&&\left\{ (s^{-1}-s)p^\prime\Delta+[(s^{-1}-s)p^{''}+a(s)\frac{(p^\prime)^2}{p}]\frac{\Delta^2}{2}\right\}
\nonumber\\
&\simeq &\frac{a(s)}{2(s-s^{-1})}\delta I_{KL}(\Delta),
\end{eqnarray}
where $\delta I_{KL}(\Delta)$ is the variation calculated for the shift information for the ordinary 
KL \cite{Vst}, 
\begin{equation}
\int d\zeta\delta p\left\{\left(\frac{p^\prime}{p}\right)^2-2\frac{p^{''}}{p}\right\}.
\end{equation}
This result indicates that the variation simply differs by a factor from the one obtained for the ordinary KL, 
whose degree is controlled by the index $s$. If the second derivative of the distance measure indicates a direct 
quantity and is responsible for keeping the discernible interval between the shifted and the original, then the sign of 
$\partial^2 I(\Delta,s)/\partial \Delta^2$ would be a signature of this stability. As an example, consider a Gaussian form 
as a representative distribution which appears in many disciplines. The Fisher information is calculated to be 
$\sigma^{-2}$ for the domain $\zeta\in [-\infty,\infty]$, where $\sigma$ is the standard deviation of the 
Gaussian distribution function. By straightforward calculation, we obtain 
\begin{eqnarray}
\frac{\partial^2 I(\Delta,s)}{\partial \Delta^2}=\frac{a(s)}{\sigma^2 (s-s^{-1})}.
\end{eqnarray}
We find that $a(s)/(s-s^{-1}) \gtrless 0$ when $s\gtrless 0$, indicating that the information is stable against the 
shift $\Delta$ when $s>0$. We note that this result is consistent with the conclusion derived from the two level 
perturbation approach obtained in section \ref{stab2}, where the corresponding $\mathcal{L}_s$ is stable (unstable) 
if $s>0$ ($s<0$). In this sense, the two different approaches for investigation of the stability associated with the distinguishability 
can be regarded as equivalently informative. It is worth mentioning that in the case of the R{\'e}nyi relative entropy we 
obtain the shift information as,
\begin{eqnarray}
I^{R}(\Delta ,s)&:=&\frac{1}{s-1}\ln\left[\int d\zeta p(\zeta)\left(\frac{p(\zeta+\Delta)}{p(\zeta)}\right)^{1-s}\right]\nonumber\\
&\simeq & \frac{1}{s-1}\ln\left[1+(1-s)\Delta \int d\zeta p^\prime+
\frac{(1-s)}{2}\Delta^2\int d\zeta \left(p^{''}-\frac{(p^\prime)^2}{p}\right)\right].
\end{eqnarray}
For the Gaussian distribution, the second derivative is calculated as 
\begin{eqnarray}
\frac{\partial^2 I^{R}(\Delta,s)}{\partial \Delta^2}=\frac{s(s-1)(\Delta-1)^2+2\sigma^2-s(s-1)}{\left[2\sigma^2-s(1-s)\Delta^2\right]^2}.
\end{eqnarray}
Therefore, when $\sigma^2>s(s-1)/2$ is satisfied, $I^{R}(\Delta,s)$ is found to be stable. 
The remarkable difference between $I(\Delta,s)$ and $I^{R}(\Delta,s)$ is that the stability is controlled only by $s$ for 
our shift information, whereas the form of distribution (i.e., the magnitude of $\sigma$) imposes the restriction for the domain 
of $s$ in the case of the R{\'e}nyi shift information in general.
\section{Behavior under Liouville dynamics}
We shall prove in this section that two states can only become less distinguishable in the course of a dynamical 
evolution when the distance between them are measured by the present one. In other words, the generalized KL 
entropy $\mathcal{L}_s$ does not increase with time, instead is shown to be constant in time under the Liouville equation 
\begin{eqnarray}
\frac{\partial p}{\partial t}+\nabla\cdot(\vec{v}p)=0,\label{eqn:LE} 
\end{eqnarray}
where $p(\vec{\zeta},t)$ denotes a probability density describing a statistical ensemble of dynamical systems 
and $\vec{v}=d\vec{\zeta}/dt$ stands for the drift velocity. The time derivative of the generalized KL 
entropy for two arbitrary probability distributions which satisfy the Liouville equation is 
\begin{eqnarray}
\frac{d\mathcal{L}_s(\mathcal{P}_1,\mathcal{P}_2)}{dt} &=& \frac{1}{s-s^{-1}}\int d\zeta\frac{\partial}{\partial t}
\left[p_1^s p_2^{1-s}-p_1^{1/s}p_2^{1-1/s}\right]\nonumber\\
&=& \frac{1}{s-s^{-1}}\int d\zeta \left[ f_1(p_1,p_2)\left( \frac{\partial p_1}{\partial t}\right)
+f_2(p_1,p_2)\left( \frac{\partial p_2}{\partial t}\right)\right],\label{eqn:dLs}
\end{eqnarray}
where 
\begin{eqnarray}
f_1=s\left( \frac{p_2}{p_1}\right)^{1-s}-\frac{1}{s}\left( \frac{p_2}{p_1}\right)^{1-\frac{1}{s}}, \quad 
f_2=(1-s)\left( \frac{p_2}{p_1}\right)^{-s}-(1-\frac{1}{s})\left( \frac{p_2}{p_1}\right)^{-\frac{1}{s}}.
\end{eqnarray}
Substituting Eq.(\ref{eqn:LE}) into Eq.(\ref{eqn:dLs}), then using $f\nabla g=\nabla(fg)-(\nabla f)g$, 
we obtain for the integral of Eq.(\ref{eqn:dLs}),
\begin{eqnarray}
&&\int d\zeta\left[\left\{s(1-s)\left(\frac{p_2}{p_1}\right)^{-s}-\frac{1}{s}(1-\frac{1}{s})
\left(\frac{p_2}{p_1}\right)^{-\frac{1}{s}}\right\}p_1 \right.\nonumber\\
&+& \left.\left\{-s(1-s)\left(\frac{p_2}{p_1}\right)^{-s-1}-\frac{1}{s}(1-\frac{1}{s})
\left(\frac{p_2}{p_1}\right)^{-\frac{1}{s}-1}\right\}p_2\right]
\vec{v}\nabla\left(\frac{p_2}{p_1}\right),
\end{eqnarray}
where we have assumed that the two probability distributions $f$ and $g$ vanish at the boundary, so that 
$\int \nabla(fg)d\zeta=0$ holds. The quantity within the bracket is calculated to be zero, therefore 
$\mathcal{L}_s$ is found to be an invariant measure under this dynamics for all values of $s$. 
It it worth mentioning that relative entropies of the form 
$\int d\zeta \mathcal{P}_1f(\mathcal{P}_2/\mathcal{P}_1)$ (the Csisz{\'a}r f-divergence), where the 
function $f$ is convex and satisfies $f(1)=0$, becomes constant in time under the Liouville type dynamical 
evolution \cite{Daff,Plastino,Yamano1,Mackey}. The fact that $d\mathcal{L}_s/dt=0$ under the Liouville 
equation proved above is consistent with this observation because $\mathcal{L}_s$ is a particular 
instance of the Csisz{\'a}r f-divergence class.
\section{Conclusions}
We have investigated properties of a novel generalized KL divergence in the context of statistical physics and 
information theory. Our approach presents a unified recipe for constructing distance measures 
for probability distributions. In this method, the ordinary KL divergence is obtained by differentiation of the generalized 
overlap with respect to the overlap index $\alpha$ and evaluating it by its unity. Similarly, the previously 
reported generalization of the KL divergence, which is consistent with the nonextensive entropy proposed in 
physics literature, can be regarded as an output of  the Jackson derivative for the overlap 
evaluated by its unity. Along this line, we can define a family of distance measures by applying the symmetric 
Jackson derivative to the generalized overlap. We have chosen $\alpha=1$ to obtain a specific generalization, which belongs to 
the Csisz{\'a}r f-divergence type and have shown some fundamental properties of the divergence measure. 
As far as the distance between two probability distributions are concerned, the KL relative entropy has 
infinite generalizations even with our recipe, depending on the evaluation index. The connection to an interpretation 
of the information gain would provide the corresponding generalized information content. 
We have obtained the ratio of the variation of the shift information to that of the ordinary one
$\delta I(\Delta,s)/\delta I_{KL}(\Delta)$. In closing, we remark on a possible application of the divergence in the light of 
the minimum KL divergence scheme. In \cite{ARP}, this minimization formalism was applied to approximately obtain 
solutions of the general $N$-dimensional linear Fokker-Planck equations. Following this reasoning, the newly introduced divergence 
could be useful for finding approximate solutions to nonlinear Fokker-Planck equations and the related time evolution 
equations. Developing this approach would require future investigation.\\

{\it Acknowledgements}\\
This work was partially supported by the Grant-in-Aid for Scientific Research from Monbukagaku-sho No.06225 and 
was presented at the DPG (Deutsche Physikalisches Gesellschaft) conference as No. DY 30.7 at Regensburg University, 
26-30 March 2007.

\end{document}